\newtheorem{lemma}{Lemma}
\def\BibTeX{{\rm B\kern-.05em{\sc i\kern-.025em b}\kern-.08em
    T\kern-.1667em\lower.7ex\hbox{E}\kern-.125emX}}
\begin{document}
\title{Energy-Efficient Resource Allocation in a Multi-UAV-Aided NOMA Network
}
\author{\IEEEauthorblockN{Xing Xi\IEEEauthorrefmark{1}, Xianbin Cao\IEEEauthorrefmark{1}, Peng Yang\IEEEauthorrefmark{2}, Jingxuan Chen\IEEEauthorrefmark{1}, and Dapeng Wu\IEEEauthorrefmark{3}}

\IEEEauthorblockA{\IEEEauthorrefmark{1}Beihang University, Beijing, China, email: \{xixing,xbcao,chenjingxuan\}@buaa.edu.cn}
\IEEEauthorblockA{\IEEEauthorrefmark{2}Singapore University of Technology and Design, Singapore, email: peng\_yang@sutd.edu.sg}
\IEEEauthorblockA{\IEEEauthorrefmark{3}University of Florida, Gainesville, FL 32611, USA, email: dpwu@ufl.edu}
}

\maketitle

%
%
%

\maketitle

\begin{abstract}
This paper is concerned with the resource allocation in a multi-unmanned aerial vehicle (UAV)-aided network for providing enhanced mobile broadband (eMBB) services for user equipments. Different from most of the existing network resource allocation approaches, we investigate a joint non-orthogonal user association, subchannel allocation and power control problem. The objective of the problem is to maximize the network energy efficiency under the constraints on user equipments' quality of service, UAVs' network capacity and power consumption. We formulate the energy efficiency maximization problem as a challenging mixed-integer non-convex programming problem.
To alleviate this problem, we first decompose the original problem into two subproblems, namely, an integer non-linear user association and subchannel allocation subproblem and a non-convex power control subproblem. We then design a two-stage approximation strategy to handle the non-linearity of the user association and subchannel allocation subproblem and exploit a successive convex approximation approach to tackle the non-convexity of the power control subproblem. Based on the derived results, we develop an iterative algorithm with provable convergence to mitigate the original problem. Simulation results show that our proposed framework can improve energy efficiency compared with several benchmark algorithms.
\end{abstract}


%
\IEEEpeerreviewmaketitle

\section{Introduction}
Enhanced mobile broadband (eMBB) has been identified as one of the three major services of 5G wireless networks \cite{series2015imt}. To provide high-quality eMBB services, which have high transmission rate requirements, the network capacity of the infrastructure should be robust. However, when network congestion or network failure caused by flash crowd traffic or infrastructure malfunction occurs in an area, terrestrial eMBB users may suffer from communication service interruption.


A promising solution to alleviate the effect of network congestion or network failure is the utilization of unmanned aerial vehicle (UAV) base stations (i.e., low-altitude UAVs equipped with transceivers), which can support fast communication service recovery or even network performance enhancement \cite{xilouris2018uav}. Motivated by these advantages, UAV-aided communications are gradually attracting the attention of researchers.

Recent works on the UAV-aided communications mainly focus on network resource allocation. For example,
Zhang et al. considered a multi-UAV-aided network and studied a joint subchannel allocation and UAV speed optimization problem to improve the uplink sum rate of the network \cite{zhang2019cellular}. Cui et al. investigated a dynamic resource allocation problem of a multi-UAV network to maximize long-term rewards. They proposed a multi-agent reinforcement learning-based algorithm to find the optimal strategy on joint user, subchannel and power level selection \cite{cui2019multi}.

However, the above works \cite{zhang2019cellular,cui2019multi} are all based on orthogonal multiple access (OMA) techniques. To further improve the utilization efficiency of network resources, non-orthogonal multiple access (NOMA) techniques have been studied for the UAV-aided communications. For example, Zhao et al. investigated a joint user scheduling, UAV trajectory and NOMA precoding problem for a UAV-aided NOMA network to maximize users' sum rate\cite{zhao2019joint}. Tang et al. studied a joint placement design, admission control, and power allocation problem for a heavy-loaded UAV-aided NOMA network to maximize the number of served users \cite{tang2019joint}. Nevertheless, the works in \cite{zhao2019joint,tang2019joint} considered single-UAV communications, which have the disadvantages of limited service capability and poor robustness compared to multi-UAV communications.
As a result, Duan et al. considered resource allocation for a multi-UAV-aided NOMA uplink network and jointly optimized subchannel allocation, transmit power, and UAVs' heights to improve the system capacity \cite{duan2019resource}. However, they adopted the classic K-mean clustering method to associate UAVs and users, which had low resource utilization.
Meanwhile, the classic K-mean clustering method could not achieve load balance among UAVs. Compared with terrestrial networks, the
capacity of UAV networks is stringently limited. Therefore, the proposed resource allocation algorithm in \cite{duan2019resource} might lead to UAV network congestion.

To improve the resource utilization and achieve load balance in a multi-UAV-aided NOMA downlink network, we investigate a joint non-orthogonal user association, subchannel allocation and power control problem in this paper. The main contributions are summarized as follows:
\begin{itemize}
\item  We formulate a joint non-orthogonal resource allocation optimization problem aiming at maximizing the network energy efficiency under the constraints on  quality of service (QoS) requirements, network capacity, and power consumption.
\item  The formulated problem is confirmed to be a challenging mixed-integer non-convex programming problem. To alleviate this problem, we decompose it into two separated subproblems, namely, an integer non-linear user association and subchannel allocation subproblem, and a non-convex power control subproblem.
\item  We then design a two-stage approximation strategy to handle the non-linearity of the user association and subchannel allocation subproblem and exploit a successive convex approximation (SCA) approach to tackle the non-convexity of the power control subproblem. Then an iterative algorithm with provable convergence is proposed to alternatively optimize the above two subproblems.
\end{itemize}


The rest of this paper is organized as follows: We present the system model and the problem formulation in Section \uppercase\expandafter{\romannumeral2}. We develop the problem solution for the formulated problem in Section \uppercase\expandafter{\romannumeral3}. Section \uppercase\expandafter{\romannumeral4} shows our simulation results and Section \uppercase\expandafter{\romannumeral5} concludes this paper.

\section{System Model and Problem Formulation}

\subsection{System Model}
In this paper, we consider a NOMA-based downlink communication scenario. In this scenario, multiple UAV base stations (UBSs) are deployed to assist a macro base station (MBS) to provide eMBB services for a collection of congested terrestrial user equipments (UEs) which cannot be served by the MBS in a geographical area.
Denote the set of UBSs and the set of UEs by $\mathcal{J}=\{1,2,\ldots ,N_{d}\}$ and $\mathcal{I}=\{1,2,\ldots ,N_{u}\}$ respectively. We consider that the locations of all UBSs and UEs are fixed and known, and all UBSs are deployed at the same altitude $H$. For simplicity, we ignore the height of the MBS and the UEs. Meanwhile, this paper considers a frequency division multiple access (FDMA) communication system. The total channel bandwidth is $W$ and is equally divided into $N_s$ orthogonal subchannels, denoted by $\mathcal{N}=\{1,2,\ldots ,N_{s}\}$. For convenience of description, we denote the subchannel $n$ of UBS $j$ as ${\mathcal{SC}_{jn}}$.
Let ${{a}_{ijn}}$ be a binary variable indicating user association and subchannel allocation and let ${\mathcal{A}} = \{{{a}_{ijn}},\forall i,j,n\}$ denote the user association and subchannel allocation matrix. We set $a_{ijn}=1$ if the subchannel ${\mathcal{SC}_{jn}}$ is allocated to UE $i$;
otherwise, $a_{ijn}=0$. This paper investigates the optimization of joint user association, subchannel allocation and UBSs' transmit power control, and we assume that the transmit power of the MBS is fixed and known.

Denote the horizontal location of UBS $j$ and the location of UE $i$ by $\bm{x}_{j}^{d}$ and $\bm{x}_{i}^{u}$ respectively.
This paper leverages the air-to-ground (ATG) propagation model \cite{al2014optimal} to obtain the channel gain from UBS $j$ to UE $i$ on the subchannel $n$, denoted by ${{h}_{ijn}}$. For the ATG link, each UE has a line-of-sight (LoS) connection with a UBS with a specific probability. The LoS probability relies on the environment (e.g., rural, suburban, urban and dense urban), the locations of the UBS and the UE, and can be expressed as 
\begin{equation}\label{PLoS}
{{P}_{LoS}}(H,d_{ij}^{h})=\frac{1}{1+{{\alpha }_{1}}exp(-{{\alpha }_{2}}({{\theta }_{ij}}-{{\alpha }_{1}}))},
\end{equation}
where ${{\alpha _1}}$ and ${{\alpha _2}}$ are constant values depending on the environment, ${{\theta }_{ij}}=\frac{180}{\pi }\times \arctan (\frac{H}{d_{ij}^{h}})$ is the elevation angle of UE $i$ towards UBS $j$, and ${d_{ij}^h}$ is the horizontal distance between UBS $j$ and UE $i$, i.e., $d_{ij}^{h}={{\left\| \bm{x}_{i}^{u}-\bm{x}_{j}^{d} \right\|}_{2}}$. Also, the non-line-of-sight (NLoS) probability is ${{P}_{NLoS}}(H,d_{ij}^{h})=1-{{P}_{LoS}}(H,d_{ij}^{h})$. Thus, the channel gain from UBS $j$ to UE $i$ on the subchannel $n$ is
\begin{equation}\label{Channel_Gain_ATG}
{h_{ijn}} = \frac{{g_{ijn}^{Tx}g_{ijn}^{Rx}{\varsigma ^2}}}{{16{\pi ^2}{{\left( {\frac{{{d_{ij}}}}{{{d_0}}}} \right)}^2}}}{10^{ - \frac{{{P_{LoS}}(H,d_{ij}^h)\eta _{LoS}^{dB} + {P_{NLoS}}(H,d_{ij}^h)\eta _{NLoS}^{dB}}}{{10}}}},
\end{equation}
where $g_{ijn}^{Tx}$ and $g_{ijn}^{Rx}$ are the transmit and receive antenna gains from UBS $j$ to UE $i$ on the subchannel $n$. $\varsigma = c/{f_c}$ is the carrier wavelength, where $c$ is the speed of light and ${f_c}$ is the carrier frequency. ${{d}_{ij}}=\sqrt{{{\left( d_{ij}^{h} \right)}^{2}}+{{H}^{2}}}$ is the distance between UBS $j$ and UE $i$ and ${{d}_{0}}$ is a far field reference distance. $\eta _{LoS}^{dB}$ (in dB) and $\eta _{NLoS}^{dB}$ (in dB) represent the excessive propagation losses corresponding to the LoS and NLoS connections respectively, which depend on the environment.

Denote the location of the MBS by $\bm{x}^{M}$.
This paper leverages the propagation path loss model \cite{rouphael2009rf} to obtain the channel gain from the MBS to UE $i$ on the subchannel $n$, denoted by $h_{in}^{M}$. Thus, the channel gain $h_{in}^{M}$ is 
\begin{equation}\label{Channel_Gain_GTG}
h_{in}^M = \frac{{g_{in}^{MTx}g_{in}^{MRx}{\varsigma ^2}}}{{16{\pi ^2}{{\left( {\frac{{d_i^M}}{{{d_0}}}} \right)}^\eta }}},
\end{equation}
where $g_{in}^{MTx}$ and $g_{in}^{MRx}$ are the transmit and receive antenna gains from the MBS to UE $i$ on the subchannel $n$, $d_{i}^{M}={{\left\| \bm{x}_{i}^{u}-{\bm{x}^{M}} \right\|}_{2}}$ is the distance between the MBS and UE $i$, and $\eta$ is the path loss exponent ($\eta  \in [2,6]$).

In the NOMA-based downlink system, the successive interference cancellation (SIC) technique is adopted at the receiver to eliminate the interference from other UEs served by the same subchannel ${\mathcal{SC}_{jn}}$ in a certain decoding order \cite{zhang2018energy,zhao2019joint}.
We assume that the UE with higher channel gain can decode the signals of the other UEs with worse channel gain served by the same subchannel ${\mathcal{SC}_{jn}}$, and the transmit power allocated to the former is not more than that of the latter.
Owing to the high implementation complexity and decoding complexity of SIC and the high complexity of resource allocation algorithms, like \cite{xiao2018joint}, we investigate the case that each ${\mathcal{SC}_{jn}}$ can be allocated to at most two UEs. In consequence, we have
\begin{equation}\label{C1}
\text{C1: }\sum\nolimits_{i \in {\mathcal I}} {{a_{ijn}}}  \le 2,\forall j \in {\mathcal J},n \in {\mathcal N},
\end{equation}
\begin{equation}\label{C2}
\text{C2: }{a_{ijn}} \in \{ 0,1\} ,\forall i \in {\mathcal I},j \in {\mathcal J},n \in {\mathcal N}.
\end{equation}

Considering the number of UEs served by the subchannel ${\mathcal{SC}_{jn}}$, we calculate the received signal-to-interference-plus-noise ratio (SINR) in the following two cases.

Case 1: When ${\mathcal{SC}_{jn}}$ is allocated to only one UE $i$, we name UE ${i}$ as a primary UE on ${\mathcal{SC}_{jn}}$. Then, the received SINR of the primary UE $i$ on ${\mathcal{SC}_{jn}}$ is
\begin{equation}\label{SINR_Case1}
{\gamma _{ijn}} = \frac{{{p_{1,jn}}{h_{ijn}}}}{{\sum\limits_{k \ne j,k \in {\mathcal J}} {{p_{kn}}{h_{ikn}}}  + p_n^Mh_{in}^M + \sigma _n^2}}.
\end{equation}

Case 2: When ${\mathcal{SC}_{jn}}$ is allocated to two UEs ${{i}_{1}}$ and ${{i}_{2}}$ with ${{h}_{{{i}_{1}}jn}}\!>\!{{h}_{{{i}_{2}}jn}}$, i.e., UE ${{i}_{1}}$ can eliminate the interference of UE ${{i}_{2}}$ on ${\mathcal{SC}_{jn}}$, we name UE ${{i}_{1}}$ and UE ${{i}_{2}}$ as a primary UE and a secondary UE on ${\mathcal{SC}_{jn}}$ respectively. Then, the received SINRs of the primary UE $i_1$ and the secondary UE $i_2$ on ${\mathcal{SC}_{jn}}$ are
\begin{equation}\label{SINR_Case21}
{\gamma _{{i_1}jn}} = \frac{{{p_{1,jn}}{h_{{i_1}jn}}}}{{\sum\limits_{k \ne j,k \in {\mathcal J}} {{p_{kn}}{h_{{i_1}kn}}}  + p_n^Mh_{{i_1}n}^M + \sigma _n^2}},
\end{equation}
\begin{equation}\label{SINR_Case22}
{\gamma _{{i_2}jn}} = \frac{{{{p}_{2,jn}}{h_{{i_2}jn}}}}{{{p_{1,jn}}{h_{{i_2}jn}}  + \!\!\! \sum\limits_{k \ne j,k \in {\mathcal J}} \!\!\! {{p_{kn}}{h_{{i_2}kn}}}  + p_n^Mh_{{i_2}n}^M  +  \sigma _n^2}},
\end{equation}
where ${{p}_{1,jn}}$ and ${{p}_{2,jn}}$ are the transmit powers allocated to the primary UE and the secondary UE on ${\mathcal{SC}_{jn}}$ respectively, ${{p}_{jn}}={{p}_{1,jn}}+{{p}_{2,jn}}$ is the total transmit power on ${\mathcal{SC}_{jn}}$, ${p_n^M}$ is the transmit power of the MBS on the subchannel $n$, and $\sigma _{n}^{2}$ represents the additive white Gaussian noise on the subchannel $n$. Let ${\mathcal{S}_{c2}}=\{(j,n)|\sum\nolimits_{i\in \mathcal{I}}{a_{ijn}}=2\}$ represent the index set corresponding to the case 2. Referring to the definition of ${{p}_{1,jn}}$ and ${{p}_{2,jn}}$ and the assumption of SIC, ${{p}_{1,jn}}$ and ${{p}_{2,jn}}$ satisfy the following constraints
\begin{equation}\label{C3}
\text{C3: } {p_{2,jn}} = 0,\forall (j,n) \notin {\mathcal{S}_{c2}},
\end{equation}
\begin{equation}\label{C4}
\text{C4: }{p_{1,jn}} \le {p_{2,jn}},\forall (j,n) \in {\mathcal{S}_{c2}}.
\end{equation}

Let $\mathcal{P}=\{{{p}_{1,jn}},{{p}_{2,jn}},\forall j,n\}$ denote the transmit power matrix. According to the Shannon capacity, the achievable data transfer rate of UE $i$ on ${\mathcal{SC}_{jn}}$ is
\begin{equation}\label{r_ijn}
{r_{ijn}} = \frac{W}{N_s}{\log _2}\left( {1 + {\gamma _{ijn}}} \right).
\end{equation}

We define UEs' different QoS requirements by their achievable data transfer rate (Mb/s). Denote the achievable data transfer rate and the minimum required data transfer rate of UE $i$ as ${{R}_{i}}$ and $R_i^{\min}$ respectively. Thus, we have
\begin{equation}\label{C5}
\text{C5: }{R_i} = \sum\limits_{j \in {\mathcal J}} {\sum\limits_{n \in {\mathcal N}} {{a_{ijn}}{r_{ijn}}} }  \ge R_i^{\min },\forall i \in {\mathcal I}.
\end{equation}

Besides, UBS $j$ needs to receive data that will be forwarded to UEs from a ground station via an uplink with limited network capacity (Mb/s). In this paper, we regard it as the maximum network capacity of UBS $j$, denoted by $C_{j}^{\max }$. Thus, we have
\begin{equation}\label{C6}
\text{C6: }\sum\limits_{i \in {\mathcal I}} {\sum\limits_{n \in {\mathcal N}} {{a_{ijn}}{r_{ijn}}} }  \le C_j^{\max },\forall j \in {\mathcal J}.
\end{equation}

Next, let ${{p}_{j}}$, $p_{j}^{c}$ and $p_{j}^{\max }$ denote the transmit power, the circuit power and the maximum power consumption limit of UBS $j$. Thus, we have
\begin{equation}\label{C7}
\text{C7: }{p_j} = \sum\nolimits_{n \in {\mathcal N}} {\left( {{p_{1,jn}} + {p_{2,jn}}} \right)}, \forall j \in {\mathcal J},
\end{equation}
\begin{equation}\label{C8}
\text{C8: }{p_j}  + p_j^c \le p_j^{\max },\forall j \in {\mathcal J},
\end{equation}
\begin{equation}\label{C9}
\text{C9: }{p_{1,jn}} \ge 0,{p_{2,jn}} \ge 0,\forall j \in {\mathcal J},n \in {\mathcal N}.
\end{equation}

We denote the energy efficiency as ${{f }_{EE}}$. Considering the fairness of service among UEs and the fairness of power consumption among UBSs, we define ${{f }_{EE}}$ as the ratio of the product of the number of UEs and the minimum achievable data transfer rate among all UEs and the product of the number of UBSs and the maximum power consumption among all UBSs. As such, the objective function can be written as
\begin{equation}\label{EE}
{f _{EE}} = \frac{ N_{u} \cdot{\mathop {\min }\nolimits_{i \in {\mathcal I}} {R_i}}}{ N_{d} \cdot {\mathop {\max }\nolimits_{j \in {\mathcal J}} ({p_j}+{p_j^c}) }},
\end{equation}
where ${ N_{u} \cdot{\mathop {\min }\nolimits_{i \in {\mathcal I}} {R_i}}}$ represents the lower bound of the total achievable data transfer rate of all UEs and ${ N_{d} \cdot {\mathop {\max }\nolimits_{j \in {\mathcal J}} ({p_j}+{p_j^c}) }}$ represents the upper bound of the total power consumption of all UBSs.

\subsection{Problem Formulation}
Considering all constraints and the objective function mentioned above, we can formulate the joint association, subchannel and power optimization problem as
\begin{equation}\label{P1o}
\begin{array}{l}
\mathop {\max }\limits_{{\mathcal A},{\mathcal P}} \ {f _{EE}} = \frac{ N_{u}}{ N_{d}} \cdot \frac{{\mathop {\min }\nolimits_{i \in {\mathcal I}} {R_i}}}{{\mathop {\max }\nolimits_{j \in {\mathcal J}} ({p_j}+{p_j^c}) }}\\
s.t. {\text{ (\ref{C1}), (\ref{C2}), (\ref{C3}), (\ref{C4}), (\ref{C5}), (\ref{C6}), (\ref{C7}), (\ref{C8}), (\ref{C9})}}.
\end{array}
\end{equation}

Since both $N_{u}$ and $N_{d}$ are given constants, solving (\ref{P1o}) is equivalent to the solution of the following (\ref{P1})
\begin{equation}\label{P1}
\begin{array}{l}
\mathop {\max }\limits_{{\mathcal A},{\mathcal P}} \ {\eta _{EE}} = \frac{{\mathop {\min }\nolimits_{i \in {\mathcal I}} {R_i}}}{{\mathop {\max }\nolimits_{j \in {\mathcal J}} ({p_j}+{p_j^c}) }}\\
s.t. {\text{ (\ref{C1}), (\ref{C2}), (\ref{C3}), (\ref{C4}), (\ref{C5}), (\ref{C6}), (\ref{C7}), (\ref{C8}), (\ref{C9})}}.
\end{array}
\end{equation}

We define the optimal $\eta _{EE}^{*}$ as
\begin{equation}\label{EE_Opt}
\eta _{EE}^* = \frac{{\mathop {\min }\nolimits_{i \in {\mathcal I}} {R_i}({{\mathcal A}^*},{{\mathcal P}^*})}}{{\mathop {\max }\nolimits_{j \in {\mathcal J}} ({p_j}({{\mathcal P}^*})+{p_j^c})}},
\end{equation}
where ${{\mathcal{A}}^{*}}$ and ${{\mathcal{P}}^{*}}$ denote the optimal user association and subchannel allocation and the optimal transmit power when yielding $\eta _{EE}^{*}$.

\begin{lemma}\label{lem:1}
\rm {$\eta _{EE}^{*}$ can be achieved if and only if \cite{zhang2018energy}
}
\begin{equation}\label{lem1}
\begin{array}{l}
\mathop {\max }\limits_{{\mathcal A},{\mathcal P}} \Big( {\mathop {\min }\limits_{i \in {\mathcal I}} {R_i}({\mathcal A},{\mathcal P})} \Big) - \eta _{EE}^*\Big( {\mathop {\max }\limits_{j \in {\mathcal J}} ({p_j}({\mathcal P})+{p_j^c})} \Big)\\
 = \Big( {\mathop {\min }\limits_{i \in {\mathcal I}} {R_i}({{\mathcal A}^*},{{\mathcal P}^*})} \Big) - \eta _{EE}^*\Big( {\mathop {\max }\limits_{j \in {\mathcal J}} ({p_j}({{\mathcal P}^*})+{p_j^c})} \Big) = 0.
\end{array}
\end{equation}
\end{lemma}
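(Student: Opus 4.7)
The plan is to invoke the standard parametric (Dinkelbach) argument for fractional programming, adapted to the min-over-max ratio here. For brevity let $N(\mathcal{A},\mathcal{P}):=\min_{i\in\mathcal{I}}R_i(\mathcal{A},\mathcal{P})$ and $D(\mathcal{P}):=\max_{j\in\mathcal{J}}(p_j(\mathcal{P})+p_j^c)$. Feasibility of C7--C9 together with the positive circuit power guarantees $D(\mathcal{P})>0$, and C5 together with (\ref{r_ijn}) guarantees $N(\mathcal{A},\mathcal{P})\ge 0$, so the ratio is well defined and $\eta_{EE}^{*}\ge 0$.

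For the ``only if'' direction, I would start from the definition of optimality in (\ref{P1}). For every feasible $(\mathcal{A},\mathcal{P})$ one has $N(\mathcal{A},\mathcal{P})/D(\mathcal{P})\le\eta_{EE}^{*}$, which, after multiplying through by the positive quantity $D(\mathcal{P})$, becomes $N(\mathcal{A},\mathcal{P})-\eta_{EE}^{*}D(\mathcal{P})\le 0$. Evaluating the same inequality at the optimizer $(\mathcal{A}^{*},\mathcal{P}^{*})$ gives equality by definition of $\eta_{EE}^{*}$. Consequently the supremum of $N-\eta_{EE}^{*}D$ over the feasible set is exactly $0$ and is attained at $(\mathcal{A}^{*},\mathcal{P}^{*})$, which is precisely (\ref{lem1}).

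For the ``if'' direction, I would reverse the argument. Suppose $(\mathcal{A}^{*},\mathcal{P}^{*})$ attains the max in (\ref{lem1}) with value $0$. Then for every feasible $(\mathcal{A},\mathcal{P})$, $N(\mathcal{A},\mathcal{P})-\eta_{EE}^{*}D(\mathcal{P})\le 0$, i.e.\ $N(\mathcal{A},\mathcal{P})/D(\mathcal{P})\le\eta_{EE}^{*}$, while equality at $(\mathcal{A}^{*},\mathcal{P}^{*})$ yields $N(\mathcal{A}^{*},\mathcal{P}^{*})/D(\mathcal{P}^{*})=\eta_{EE}^{*}$. Hence $(\mathcal{A}^{*},\mathcal{P}^{*})$ solves (\ref{P1}) with objective value $\eta_{EE}^{*}$, closing the equivalence.

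There is no serious obstacle: the lemma is a direct specialization of the Jagannathan--Dinkelbach theorem, and the only thing to be careful about is that the ratio has a well-defined, strictly positive denominator (handled by $p_j^c>0$) and that the non-smooth min/max form does not interfere with the scalar manipulation---which it does not, since the argument only uses the inequality $N-\eta^{*}D\le 0$ at each feasible point, without any differentiability. I would simply cite this as the Dinkelbach characterization of fractional optima and present the two inequality chains above, which is what the reference \cite{zhang2018energy} already encapsulates.
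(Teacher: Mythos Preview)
Your proposal is correct and follows essentially the same Dinkelbach-type argument as the paper's own proof: both directions proceed by multiplying through by the positive denominator to convert the ratio optimality into the subtractive inequality $N-\eta_{EE}^{*}D\le 0$ with equality at the optimizer, and then reverse the steps for sufficiency. The only (minor) addition in your write-up is the explicit justification that $D(\mathcal{P})>0$ via $p_j^c>0$, which the paper's proof uses implicitly when cross-multiplying.
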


\begin{proof}
A similar proof can be found in \cite{zhang2018energy}.
\end{proof}

According to Lemma \ref{lem:1}, we can transform the objective function in (\ref{P1}) into a subtractive form, and the problem (\ref{P1}) can be rewritten as
\begin{equation}\label{P2}
\begin{array}{l}
\mathop {\max }\limits_{{\mathcal A},{\mathcal P}} {\mkern 1mu} {\rm{ }}\Big( {\mathop {\min }\limits_{i \in {\mathcal I}} {R_i}} \Big) - {\eta _{EE}}\Big( {\mathop {\max }\limits_{j \in {\mathcal J}} ({p_j}+{p_j^c})} \Big)\\
s.t. {\text{ (\ref{C1}), (\ref{C2}), (\ref{C3}), (\ref{C4}), (\ref{C5}), (\ref{C6}), (\ref{C7}), (\ref{C8}), (\ref{C9})}}.
\end{array}
\end{equation}

In the problem (\ref{P2}), (\ref{C1}), (\ref{C2}), (\ref{C5}) and (\ref{C6}) involve binary variables $\{{{a}_{ijn}}\}$. Further, even if $\{{{a}_{ijn}}\}$ are fixed, (\ref{C5}) and (\ref{C6}) are not convex constraints. Therefore, (\ref{P2}) is a mixed-integer non-convex programming problem, which is indeterminable or NP-hard \cite{lee2011mixed} and challenging to be solved.
Besides, ${\mathcal A}$ and ${\mathcal P}$ are coupled in the objective function and the constraints (\ref{C5}) and (\ref{C6}), which increases the difficulty of mitigating (\ref{P2}). Fortunately, we observe that the complexity of (\ref{P2}) may be weakened if ${\mathcal A}$ and ${\mathcal P}$ can be decoupled.
Based on this crucial observation, we first decompose (\ref{P2}) into two separated subproblems, namely, association and subchannel optimization with fixed transmit power and power optimization with fixed user association and subchannel allocation. Based on the solutions of the above two subproblems, we then develop an iterative algorithm for (\ref{P2}) to alternatively optimize these two subproblems. The detailed procedures are described in the following section.

\section{Problem solution}

\subsection{User Association and Subchannel Allocation}
For any given transmit power $\mathcal{P}$, this subsection considers the subproblem of (\ref{P2}) of user association and subchannel allocation. By introducing auxiliary variables ${{\eta }_{R}}$ and $\{{{\eta }_{i}},\forall i\in \mathcal{I}\}$, the user association and subchannel allocation subproblem can be formulated as
\begin{equation}\label{P3}
\begin{array}{l}
\mathop {\max }\limits_{{\mathcal A},{{\eta }_{R}},\{{{\eta }_{i}}\} } \ {\eta _R}\\
s.t.\\
{\eta _i}  = \sum\limits_{j \in {\mathcal J}} {\sum\limits_{n \in {\mathcal N}} {{a_{ijn}}{r_{ijn}({\mathcal A})}} },\forall i \in {\mathcal I},\\
{\eta _i} \ge R_i^{\min },\forall i \in {\mathcal I},\\
{\eta _i} \ge {\eta _R},\forall i \in {\mathcal I},\\
{\text{ (\ref{C1}), (\ref{C2}) and (\ref{C6}) are satisfied,}}
\end{array}
\end{equation}
where ${\eta_i}$ represents the achievable data transfer rate of UE $i$ and ${\eta_R}$ represents the minimum achievable data transfer rate among all UEs.

However, the above problem (\ref{P3}) is challenging to be solved since the achievable data transfer rate ${{r}_{ijn}({\mathcal A})}$ is not a fixed value even with fixed transmit power $\mathcal{P}$. From (\ref{SINR_Case1}), (\ref{SINR_Case21}), (\ref{SINR_Case22}) and (\ref{r_ijn}), it can be observed that ${{r}_{ijn}({\mathcal A})}$ takes different values depending on whether UE $i$ is a primary UE or a secondary UE on ${\mathcal{SC}_{jn}}$. Therefore, the value of ${{r}_{ijn}({\mathcal A})}$ depends on $\mathcal{A}$ with the fixed $\mathcal{P}$. According to this key observation, we propose a two-stage approximation strategy to further decompose the problem (\ref{P3}) into two integer linear programming (ILP) problems, both of which can be solved efficiently by existing optimization tools such as MOSEK \cite{mosek2019}. The detailed procedures are described as follows.

\subsubsection{Primary User Association and Subchannel Allocation}
We assume that each ${\mathcal{SC}_{jn}}$ can be allocated to at most one UE at the primary user association and subchannel allocation stage (hereinafter referred to as the primary stage). Therefore, each UE $i$ can be regarded as a primary UE on ${\mathcal{SC}_{jn}}$ at this stage, and thus we can formulate the primary user association and subchannel allocation subproblem as the following ILP problem
\begin{equation}\label{P4}
\begin{array}{l}
\mathop {\max }\limits_{{\mathcal A},{{\eta }_{R}},\{{{\eta }_{i}}\} } \ {\eta _R}\\
s.t\\
\sum\nolimits_{i \in {\mathcal I}} {{a_{ijn}}}  \le 1,\forall j \in {\mathcal J},n \in {\mathcal N},\\
{\eta _i}  = \sum\limits_{j \in {\mathcal J}} {\sum\limits_{n \in {\mathcal N}} {{a_{ijn}}{r_{ijn}^p}} },\forall i \in {\mathcal I},\\
{\eta _i} \ge R_i^{\min },\forall i \in {\mathcal I},\\
{\eta _i} \ge {\eta _R},\forall i \in {\mathcal I},\\
\sum\limits_{i \in {\mathcal I}} {\sum\limits_{n \in {\mathcal N}} {{a_{ijn}}{r_{ijn}^p}} }  \le C_j^{\max},\forall j \in {\mathcal J},\\
{a_{ijn}} \in \{ 0,1\} ,\forall i \in {\mathcal I},j \in {\mathcal J},n \in {\mathcal N},
\end{array}
\end{equation}
where ${r_{ijn}^p} = \frac{W}{N_s}{\log _2}\Big( {1 + \frac{{{p_{1,jn}}{h_{ijn}}}}{{\sum\limits_{k \ne j,j \in {\mathcal J}} {{p_{kn}}{h_{ikn}}}  + p_n^Mh_{in}^M + \sigma _n^2}}} \Big)$.

Let $\{a_{ijn}^{p*}\}$ denote the solution of (\ref{P4}) and ${{\mathcal{S}}_{p*}}=\{(i,j,n)|a_{ijn}^{p*}=1\}$ denote the index set of the user association and subchannel allocation determined at the primary stage. In addition, for the convenience of the description of the following secondary user association and subchannel allocation, we let $R_{i}^{p}=\sum\limits_{j\in \mathcal{J}}{\sum\limits_{n\in \mathcal{N}}{a_{ijn}^{p*}{{r}_{ijn}^p}}}$ and $C_{j}^{p}=\sum\limits_{i\in \mathcal{I}}{\sum\limits_{n\in \mathcal{N}}{a_{ijn}^{p*}{{r}_{ijn}^p}}}$ represent the achievable data transfer rate of UE $i$ and the total data transfer rate of UBS $j$ at the primary stage respectively.

\subsubsection{Secondary User Association and Subchannel Allocation}
Similarly, we assume that each ${\mathcal{SC}_{jn}}$ can be allocated to at most one UE at the secondary user association and subchannel allocation stage (hereinafter referred to as the secondary stage). Particularly, based on the primary stage, we can calculate the achievable data transfer rate $r_{ijn}$ at the secondary stage in the following two cases.

Case 1: For each $\mathcal{SC}_{jn}$, if ${\mathcal{SC}_{jn}}$ is not allocated to any UE at the primary stage, then UE $i$ is a primary UE on $\mathcal{SC}_{jn}$ at the secondary stage. Let ${{\mathcal{S}}_{u}}=\{(i,j,n)|\sum\nolimits_{k\in \mathcal{I}}{a_{kjn}^{p*}}=0\}$ represent the index set of the user association and subchannel allocation corresponding to this case. Thus, for each $(i,j,n)\in {{\mathcal{S}}_{u}}$, the achievable data transfer rate $r_{ijn}^{u}$ at the secondary stage is
\begin{equation}\label{r_su}
r_{ijn}^{u} = \frac{W}{N_s}{\log _2}\Big( {1 + \frac{{{p_{1,jn}}{h_{ijn}}}}{{\sum\limits_{k \ne j,k \in {\mathcal J}} {{p_{kn}}{h_{ikn}}}  + p_n^Mh_{in}^M + \sigma _n^2}}} \Big).
\end{equation}

Case 2: If ${\mathcal{SC}_{jn}}$ is allocated to a UE at the primary stage, then we denote this UE as ${{i}_{p}}$, i.e., $({{i}_{p}},j,n)\in {{\mathcal{S}}_{p*}}$. When ${\mathcal{SC}_{jn}}$ is allocated to UE ${{i}_{p}}$ and UE $i$ at the primary and secondary stages respectively, one of the two UEs is a primary UE and the other is a secondary UE on $\mathcal{SC}_{jn}$, which is determined by the relative relationship of the two UEs' channel gains. If ${{h}_{ijn}}>{{h}_{{{i}_{p}}jn}}$, then UE $i$ and UE ${{i}_{p}}$ are the primary UE and the secondary UE on $\mathcal{SC}_{jn}$ respectively, and the achievable data transfer rate $r_{{{i}_{p}}jn}^{p}$ of UE ${{i}_{p}}$ at the primary stage will change. Let ${{\mathcal{S}}_{o1}}=\{(i,j,n)|{{h}_{ijn}}>{{h}_{{{i}_{p}}jn}},({{i}_{p}},j,n)\in {{\mathcal{S}}_{p*}}\}$ represent the index set of the user association and subchannel allocation corresponding to this case. Thus, for each $(i,j,n)\in {{\mathcal{S}}_{o1}}$, the achievable data transfer rate $r_{ijn}^{o1}$ at the secondary stage is
\begin{equation}\label{r_sn1}
r_{ijn}^{o1} = \frac{W}{N_s}{\log _2}\Big( {1 + \frac{{{p_{1,jn}}{h_{ijn}}}}{{\sum\limits_{k \ne j,k \in {\mathcal J}} {{p_{kn}}{h_{ikn}}}  + p_n^Mh_{in}^M + \sigma _n^2}}} \Big),
\end{equation}
and the change of the achievable data transfer rate $r_{{{i}_{p}}jn}^{p}$ at the primary stage is
\begin{equation}\label{delta_rp}
\begin{array}{l}
\Delta r_{{i_p}jn}^p =\\
\frac{W}{N_s}{\log _2}\Big( {1 + \frac{{{p_{2,jn}}{h_{{i_p}jn}}}}{{{p_{1,jn}}{h_{{i_p}jn}} + \sum\limits_{k \ne j,j \in {\mathcal J}} {{p_{kn}}{h_{{i_p}kn}}}  + p_n^Mh_{{i_p}n}^M + \sigma _n^2}}} \Big)\\
 - \frac{W}{N_s}{\log _2}\Big( {1 \!\! + \!\! \frac{{{p_{1,jn}}{h_{{i_p}jn}}}}{{\sum\limits_{k \ne j,j \in {\mathcal J}} {{p_{kn}}{h_{{i_p}kn}}}  + p_n^Mh_{{i_p}n}^M + \sigma _n^2}}}
 \Big).
\end{array}
\end{equation}

If ${{h}_{ijn}} < {{h}_{{{i}_{p}}jn}}$, then UE ${{i}_{p}}$ and UE $i$ are the primary UE and the secondary UE on $\mathcal{SC}_{jn}$ respectively. Let ${{\mathcal{S}}_{o2}}=\{(i,j,n)|{{h}_{ijn}}<{{h}_{{{i}_{p}}jn}},({{i}_{p}},j,n)\in {{\mathcal{S}}_{p*}}\}$ represent the index set of the user association and subchannel allocation corresponding to this case. Thus, for each $(i,j,n)\in {{\mathcal{S}}_{o2}}$, the achievable data transfer rate $r_{ijn}^{o2}$ at the secondary stage is
\begin{equation}\label{r_sn2}
\begin{array}{l}
r_{ijn}^{o2} =
\frac{W}{N_s}{\log _2}\Big( {1 + \frac{{{p_{2,jn}}{h_{ijn}}}}{{{p_{1,jn}}{h_{ijn}} + \sum\limits_{k \ne j,j \in {\mathcal J}} {{p_{kn}}{h_{ikn}}}  + p_n^Mh_{in}^M + \sigma _n^2}}} \Big).
\end{array}
\end{equation}

Let ${{\mathcal{S}}_{2}}={{\mathcal{S}}_{u}}\cup {{\mathcal{S}}_{o1}}\cup {{\mathcal{S}}_{o2}}$ represent the candidate index set of the feasible user association and subchannel allocation at the secondary stage, and let $\mathcal{S}_{JN}^{p*}(i)=\{(j,n)|(i,j,n)\in {{\mathcal{S}}_{p*}}\}$, ${{\mathcal{I}}_{o1}}(j,n)=\{i|(i,j,n)\in {{\mathcal{S}}_{o1}}\}$ and $\mathcal{S}_{IN}^{o1}(j)=\{(i,n)|(i,j,n)\in {{\mathcal{S}}_{o1}}\}$. For each $i\in {\mathcal I}$, the change of the achievable data transfer rate of UE $i$ at the primary stage is
\begin{equation}\label{delta_Rip}
\Delta R_i^p = \sum\limits_{(j,n) \in {\mathcal S}_{JN}^{p*}(i)} {\sum\limits_{k \in {{\mathcal I}_{o1}}(j,n)} {{a_{kjn}}} \Delta r_{ijn}^p}.
\end{equation}

For each ${j \in {\mathcal J}}$, the change of the total data transfer rate of UBS $j$ at the primary stage is
\begin{equation}\label{delta_Cjp}
\Delta C_{j}^{p}=\sum\limits_{(i,n)\in \mathcal{S}_{IN}^{o1}(j)}{{{a}_{ijn}}\Delta r_{{{i}_{p}}jn}^{p}}.
\end{equation}

Based on the above derivations, we can formulate the secondary user association and subchannel allocation subproblem as the following ILP problem
\begin{equation}\label{P5}
\begin{array}{l}
\mathop {\max }\limits_{{\mathcal A},{{\eta }_{R}},\{{{\eta }_{i}}\} } \ {\eta _R}\\
s.t\\
\sum\nolimits_{i \in {\mathcal I}} {{a_{ijn}}}  \le 1,\forall j \in {\mathcal J},n \in {\mathcal N},\\
{\eta _i} = \sum\limits_{j \in {\mathcal J}} {\sum\limits_{n \in {\mathcal N}} {{a_{ijn}}r_{ijn}^s} } + \Delta R_i^p + R_i^p,\forall i \in {\mathcal I},\\
{\eta _i} \ge R_i^{\min },\forall i \in {\mathcal I},\\
{\eta _i} \ge {\eta _R},\forall i \in {\mathcal I},\\
\sum\limits_{i \in {\mathcal I}} {\sum\limits_{n \in {\mathcal N}} {{a_{ijn}}r_{ijn}^s } + \Delta C_{j}^{p} + C_j^p }  \le C_j^{\max },\forall j \in {\mathcal J},\\
{a_{ijn}} \in \{ 0,1\} ,\forall (i,j,n) \in {{\mathcal S}_2},\\
{a_{ijn}}{\rm{ = }}0,\forall (i,j,n) \notin {{\mathcal S}_2},
\end{array}
\end{equation}
where $r_{ijn}^s = \left\{ \begin{array}{*{35}{l}}
r_{ijn}^{u}, &\forall (i,j,n) \in {{\mathcal S}_u},\\
r_{ijn}^{o1}, &  \forall (i,j,n) \in {{\mathcal S}_{o1}},\\
r_{ijn}^{o2}, &  \forall (i,j,n) \in {{\mathcal S}_{o2}},\\
0, &  \forall (i,j,n) \notin {{\mathcal S}_2}.
\end{array} \right.$

Let $\{a_{ijn}^{s*}\}$ denote the solution of (\ref{P5}), and then the solution of (\ref{P3}) can be approximated as $\{a_{ijn}^{*}\}=\{a_{ijn}^{p*}+a_{ijn}^{s*}\}$.

\subsection{Power Control}
For any given user association and subchannel allocation $\mathcal{A}$, this subsection considers the subproblem of (\ref{P2}) of transmit power control.
Let ${{\mathcal{S}}_{a}}=\{(i,j,n)|{{a}_{ijn}}=1\}$ and ${{\mathcal{I}}_{a}}(j,n)=\{i|{{a}_{ijn}}=1\}$ represent the index set of the user association and subchannel allocation and the index set of UEs served by $\mathcal{SC}_{jn}$ respectively. Then we divide the user association and subchannel allocation into two categories. One category is that UE $i$ is a secondary UE on $\mathcal{S}{{\mathcal{C}}_{jn}}$, and let ${{\mathcal{S}}_{as}}=\{(i,j,n)\in {{\mathcal{S}}_{a}}|\sum\limits_{k\in \mathcal{I}}{{{a}_{kjn}}=2},{{h}_{ijn}}<\underset{k\in {{\mathcal{I}}_{_{a}}}(j,n)}{\mathop{\max }}\,{{h}_{kjn}}\}$ represent the index set of such user association and subchannel allocation. The other category is that UE $i$ is a primary UE on $\mathcal{SC}_{jn}$, and let ${{\mathcal{S}}_{ap}}={{\mathcal{S}}_{a}}\backslash {{\mathcal{S}}_{as}}$ represent the index set of such user association and subchannel allocation. Besides, let ${\mathcal S}_{JN}^{ap}(i)=\{(j,n)|(i,j,n)\in {{\mathcal{S}}_{ap}}\}$, ${\mathcal S}_{JN}^{as}(i)=\{(j,n)|(i,j,n)\in {{\mathcal{S}}_{as}}\}$, ${\mathcal S}_{IN}^{ap}(j)=\{(i,n)|(i,j,n)\in {{\mathcal{S}}_{ap}}\}$, and ${\mathcal S}_{IN}^{as}(j)=\{(i,n)|(i,j,n)\in {{\mathcal{S}}_{as}}\}$. Based on the above defined sets and by introducing auxiliary variables ${{\eta }_{R}}$, ${{\eta }_{P}}$ and $\{{{\eta }_{i}},\forall i\in \mathcal{I}\}$, the power control subproblem can be formulated as
\begin{subequations}\label{P6}
\begin{alignat}{2}
& \mathop {\max }\limits_{{\mathcal P},{{\eta }_{R}}, {{\eta }_{P}}, \{{{\eta }_{i}}\}, \{{p_j}\}} \quad {\eta _R} - {\eta _{EE}}{\eta _P}\\
& s.t \nonumber \\
& \sum\limits_{(j,n) \in {\mathcal S}_{JN}^{ap}(i)} {\!\!\!\!\!\!\!\!\!}{r_{ijn}^p({\mathcal P})}  + {\!\!\!\!\!\!\!\!\!}\sum\limits_{(j,n) \in {\mathcal S}_{JN}^{as}(i)} {\!\!\!\!\!\!\!\!\!}{r_{ijn}^s({\mathcal P})}  \ge {\eta _i},\forall i \in {\mathcal I},\\
& {\eta _i} \ge R_i^{\min },\forall i \in {\mathcal I},\\
& {\eta _i} \ge {\eta _R},\forall i \in {\mathcal I},\\
& \sum\limits_{(i,n) \in {\mathcal S}_{IN}^{ap}(j)} {\!\!\!\!\!\!\!\!\!}{r_{ijn}^p({\mathcal P})}  + {\!\!\!\!\!\!\!\!\!}\sum\limits_{(i,n) \in {\mathcal S}_{IN}^{as}(j)} {\!\!\!\!\!\!\!\!\!}{r_{ijn}^s({\mathcal P})}  \le C_j^{\max },\forall j \in {\mathcal J}, \displaybreak[0] \\
& {p_{j}} = \sum\limits_{n \in {\mathcal N}} {\left( {{p_{1,jn}} + {p_{2,jn}}} \right)},\forall j \in {\mathcal J},\\
& {p_j}  + p_j^c \le {{\eta }_{P}},\forall j \in {\mathcal J}, \\
& {p_j}  + p_j^c \le p_j^{\max },\forall j \in {\mathcal J}, \\
& {\text{(\ref{C3}), (\ref{C4}) and (\ref{C9}) are satisfied.}}
\end{alignat}
\end{subequations}
where ${\eta_i}$ represents the achievable data transfer rate of UE $i$, ${\eta_R}$ represents the minimum achievable data transfer rate among all UEs and ${{\eta }_{P}}$ represents the maximum power consumption among all UBSs.
\begin{equation}\label{r_ijn1_P6}
\begin{array}{l}
r_{ijn}^p(\mathcal{P}) = \\
\frac{W}{N_s}{\log _2}\Big( {1 + \frac{{{p_{1,jn}}{h_{ijn}}}}{{\sum\limits_{k \ne j,k \in {\mathcal J}} {( {{p_{1,kn}} + {p_{2,kn}}} ){h_{ikn}}}  + p_n^Mh_{in}^M + \sigma _n^2}}} \Big),
\end{array}
\end{equation}
\begin{equation}\label{r_ijn2_P6}
\begin{array}{l}
r_{ijn}^s(\mathcal{P}) = \\
\frac{W}{N_s}{\log _2}\Big( {1 + \frac{{{p_{2,jn}}{h_{ijn}}}}{{{p_{1,jn}}{h_{ijn}} + {\!\!\!\!\!\!}\sum\limits_{k \ne j,k \in {\mathcal J}} {\!\!\!\!\!\!}{( {{p_{1,kn}} + {p_{2,kn}}} ){h_{ikn}}}  + p_n^Mh_{in}^M + \sigma _n^2}}} \Big).
\end{array}
\end{equation}

Note that in (\ref{P6}b) and (\ref{P6}e), $r_{ijn}^p({\mathcal P})$ and $r_{ijn}^s({\mathcal P})$ are neither convex nor concave with respect to $\mathcal{P}$. Thus, (\ref{P6}b) and (\ref{P6}e) are not convex constraints, and the problem (\ref{P6}) is a non-convex optimization problem.
To solve this non-convex problem, we attempt to approximate the non-convex constraints as convex ones and then transform the non-convex problem into a convex one. To this aim, we resort to a successive convex approximation (SCA) approach \cite{boyd2004convex}. The SCA approach is an efficient technique to solve various types of non-convex optimization problems. The core idea of the SCA approach can be briefly described as approximating the original function as a more tractable function at a given point in each iteration. Specifically, the following lemma shows a method of tackling the non-convex constraints and approximating the non-convex (\ref{P6}) as a convex one. The approximated convex problem can be solved efficiently by existing optimization tools such as MOSEK \cite{mosek2019}.
\begin{lemma}\label{lem:2}
\rm{By exploring the SCA approach, (\ref{P6}) can be approximated into a convex optimization problem. Besides, the solution of the approximated problem is feasible for (\ref{P6}).}
\end{lemma}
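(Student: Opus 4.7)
The plan is to identify where (\ref{P6}) fails to be convex and to construct a convex surrogate that is a valid inner approximation. Constraints (\ref{P6}c), (\ref{P6}d), (\ref{P6}f)--(\ref{P6}h), together with (\ref{C3}), (\ref{C4}), and (\ref{C9}), are already affine in $\mathcal{P}$ and in the auxiliary variables $\eta_R$, $\eta_P$, $\{\eta_i\}$, $\{p_j\}$; the objective $\eta_R - \eta_{EE}\eta_P$ is linear in the decision variables with $\eta_{EE}$ treated as a constant supplied by the outer Dinkelbach-style recursion established in Lemma~\ref{lem:1}. The only obstruction is therefore the pair of rate inequalities (\ref{P6}b) and (\ref{P6}e) involving $r_{ijn}^p(\mathcal{P})$ and $r_{ijn}^s(\mathcal{P})$.

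The key observation I would exploit is that both rate expressions admit a difference-of-concave decomposition $r=\frac{W}{N_s}[\log_2(\mathrm{num}+\mathrm{den})-\log_2(\mathrm{den})]$, in which the numerator and denominator inside each logarithm are affine with non-negative coefficients in the entries of $\mathcal{P}$. For (\ref{P6}b) I need a concave lower bound on each $r_{ijn}$; at the current iterate $\mathcal{P}^{(k)}$ I would replace $\log_2(\mathrm{den})$ by its first-order Taylor expansion about $\mathcal{P}^{(k)}$, and since $\log_2(\cdot)$ is concave this tangent hyperplane is a global affine over-estimator, so subtracting it from the unchanged concave $\log_2(\mathrm{num}+\mathrm{den})$ yields a concave under-estimator $\tilde r_{ijn}^{(k)}(\mathcal{P})\le r_{ijn}(\mathcal{P})$. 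Requiring the sum of these surrogates to exceed $\eta_i$ tightens (\ref{P6}b) into a convex constraint. Symmetrically, for (\ref{P6}e) I need a convex upper bound; here I would Taylor-expand the other term $\log_2(\mathrm{num}+\mathrm{den})$ to obtain an affine over-estimator and keep the $-\log_2(\mathrm{den})$ piece intact, producing a convex over-estimator $\hat r_{ijn}^{(k)}(\mathcal{P})\ge r_{ijn}(\mathcal{P})$ whose sum being bounded by $C_j^{\max}$ forms a convex constraint.

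After both substitutions the surrogate has a linear objective, affine equalities and inequalities, and two families of rate constraints of the forms ``concave $\ge$ affine'' and ``convex $\le$ constant''; it is therefore a convex program solvable by MOSEK. The feasibility claim then follows immediately: any $\mathcal{P}$ passing the tightened (\ref{P6}b) satisfies $r_{ijn}\ge\tilde r_{ijn}^{(k)}\ge\eta_i$, and any $\mathcal{P}$ passing the tightened (\ref{P6}e) satisfies $\sum r_{ijn}\le\sum\hat r_{ijn}^{(k)}\le C_j^{\max}$, while the remaining constraints are preserved verbatim, so (\ref{P6}) holds at the surrogate's optimum. The main obstacle I anticipate is not conceptual but bookkeeping: correctly differentiating $\log_2(\mathrm{den})$ and $\log_2(\mathrm{num}+\mathrm{den})$ with respect to every $p_{1,jn}$ and $p_{2,jn}$, noting that each such variable enters both as an intra-cell signal term and, through $p_{1,kn}+p_{2,kn}$, as inter-cell interference in every other $(i,k,n)$ summand, and assembling these derivatives into closed-form affine surrogates without confusing the indices that govern whether a UE plays a primary or secondary role on a given $\mathcal{SC}_{jn}$.
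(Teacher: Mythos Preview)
Your proposal is correct and mirrors the paper's own proof almost exactly: the paper also isolates (\ref{P6}b) and (\ref{P6}e) as the only non-convex constraints, writes each rate as a difference of two concave $\log_2(\cdot)$ terms with affine arguments, linearizes the subtracted term for (\ref{P6}b) and the leading term for (\ref{P6}e) via first-order Taylor expansion at $\mathcal{P}^{(r)}$, and then argues feasibility from the resulting inner approximation. The ``bookkeeping'' you flag is precisely what the paper spends most of Appendix~B writing out in closed form.
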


\begin{proof}
Please refer to Appendix A.
\end{proof}

\subsection{Iterative Association, Subchannel and Power Optimization}
Based on the above derivations, we propose an iterative association, subchannel and power optimization (IASPO) algorithm to solve (\ref{P2}), which is summarized in Algorithm \ref{alg1}. For convenience of description, we let ${\eta_R}={\mathop {\min }\nolimits_{i \in {\mathcal I}} {R_i}} $, ${{\eta_P}}= {\mathop {\max }\nolimits_{j \in {\mathcal J}} ({p_j}+ p_j^c)} $.
Besides, the following lemma declares the convergence and complexity of the IASPO algorithm.
\begin{lemma}\label{lem:3}
\rm{The IASPO algorithm is convergent, and its complexity is $O( {{r}_{\max }}( 2({N_{u}}+1)^{N_d\cdot N_s} +(2N_d\cdot N_s)^{3.5} ))$ in the worst case.}
\end{lemma}
\begin{proof}
Please refer to Appendix B.
\end{proof}

%
%

\begin{algorithm}
\caption{Iterative Association, Subchannel and Power Optimization}
\label{alg1}
\begin{algorithmic}[1]
\STATE \textbf{Initialize} ${{\mathcal P}^{(0)}}$, and let $r=0$.
\REPEAT
\STATE \textbf{Association and subchannel optimization:}
\STATE For given ${{\mathcal{P}}^{(r)}}$, obtain the solution by solving (\ref{P4}) and (\ref{P5}), and denote the solution by ${{\mathcal{A}}^{(r+1)}}$.
\IF {$ r > 0$ and ${\eta_R}({\mathcal A}^{(r+1)},{\mathcal P}^{(r)})<{\eta_R}({\mathcal A}^{(r)},{\mathcal P}^{(r)})$}
    \STATE Set ${\mathcal A}^{(r+1)}={\mathcal A}^{(r)}$.
\ENDIF
\STATE \textbf{Power optimization:}
\STATE For given ${{\mathcal{P}}^{(r)}}$ and ${{\mathcal{A}}^{(r+1)}}$, calculate $\eta_{EE} = \eta _{EE}({\mathcal A}^{(r+1)},{\mathcal P}^{(r)})$, obtain the solution by solving the approximated convex problem, and denote the solution by ${{\mathcal{P}}^{(r+1)}}$.
\STATE Update $r=r+1$.
\UNTIL Convergence or $r\ge {{r}_{\max }}$.
\end{algorithmic}
\end{algorithm}

\section{Simulation results}

\subsection{Comparison Algorithms and Parameter Setting}

To our best knowledge, there are no existing works to be compared.
Therefore, to validate the effectiveness of the proposed IASPO algorithm, we compare the proposed algorithm with two benchmark algorithms:
1) Association and subchannel optimization-only (ASOO) algorithm: Allocate transmit power according to the initial power ${{\mathcal P}^{(0)}}$, and optimize user association and subchannel allocation by solving (\ref{P4}) and (\ref{P5}).
2) IASPO-FDMA algorithm: This algorithm is similar to the IASPO algorithm, except that it considers an FDMA-based communication scenario where each subchannel ${\mathcal{SC}_{jn}}$ can be allocated to at most one UE.

Set the size of the considered geographic area is a disc of radius $R_u=500$ m. The MBS is located at the center (0,0), and the UEs and UBSs are uniformly distributed in the annulus ($R_l$, $R_u$), where $R_l=250$ m. For each UE $i \in {\mathcal I}$, $R_{i}^{\min }$ is subject to a uniform distribution $U(R_{low}^{\min },R_{up}^{\min })$, and $R_{low}^{\min }=1$ Mb/s, $R_{up}^{\min }=2$ Mb/s. For each UBS $j \in {\mathcal J}$, $p_{j}^{c}=20$ dBm, $p_{j}^{\max }=24$ dBm, and $C_{j}^{\max }=100$ Mb/s.
More simulation parameters are listed in Table \ref{parameter_setting}.
\begin{table}[!t]
\renewcommand{\arraystretch}{1.3}
\caption{System parameters}
\label{parameter_setting}
\centering
\begin{tabular}{|c|c|c|c|c|c|}
\hline
Parameters        &	Value	                  & Parameters	        & Value	\\\hline
$H$               &	100 m	                  & $d_0$	            & 1	    \\\hline
$N_s$               &	4	                      & $g_{ijn}^{Tx}$	    & 1	    \\\hline
$W$               &	40 MHz	                  & $g_{ijn}^{Rx}$	    & 1	    \\\hline
${{\alpha }_{1}}$ &	4.88	                  & $g_{in}^{MTx}$	    & 1	    \\\hline
${{\alpha }_{2}}$ &	0.43	                  & $g_{in}^{MRx}$	    & 1	    \\\hline
${p_n^M}$         &	24 dBm	                  & $\eta _{LoS}^{dB}$	& 0.1   \\\hline
${\sigma _n^2}$   &	-85 dBm	                  & $\eta _{NLoS}^{dB}$	& 21	\\\hline
${{f}_{c}}$       &	2.5 GHz                   & $\eta$	            & 3	    \\\hline
$c$               &	$3\times {{10}^{8}}$ m/s  & $r_{max}$	        & 1000	\\\hline
\end{tabular}
\end{table}

\subsection{Performance Evaluation}
All comparison algorithms need to initialize ${{\mathcal P}^{(0)}}$. For the algorithms except the IASPO-FDMA algorithm, we initialize ${{\mathcal P}^{(0)}}$ to $p_{1,jn}^{(0)} = p_{2,jn}^{(0)} = \frac{p_{j}^{\max }-p_{j}^{c}}{4N_s}$ for all $j \in {\mathcal J}$, $n \in {\mathcal N}$. For the IASPO-FDMA algorithm, we initialize ${{\mathcal P}^{(0)}}$ to $p_{jn}^{(0)}=\frac{p_{j}^{\max }-p_{j}^{c}}{2N_s}$ for all $j \in {\mathcal J}$, $n \in {\mathcal N}$.

We perform all comparison algorithms on one hundred randomly generated data sets in the simulation, and the final result is the average of the one hundred results.

We first study the convergence of the proposed IASPO algorithm. Fig. \ref{objective_vs_iteration} illustrates the convergence behaviour of the energy efficiency $f_{EE}$ of the IASPO algorithm. We can observe that $f_{EE}$ increases monotonously with the increase of the iteration index and quickly converges to a certain value.
\begin{figure}[!t]
\centering
\includegraphics[width=3in]{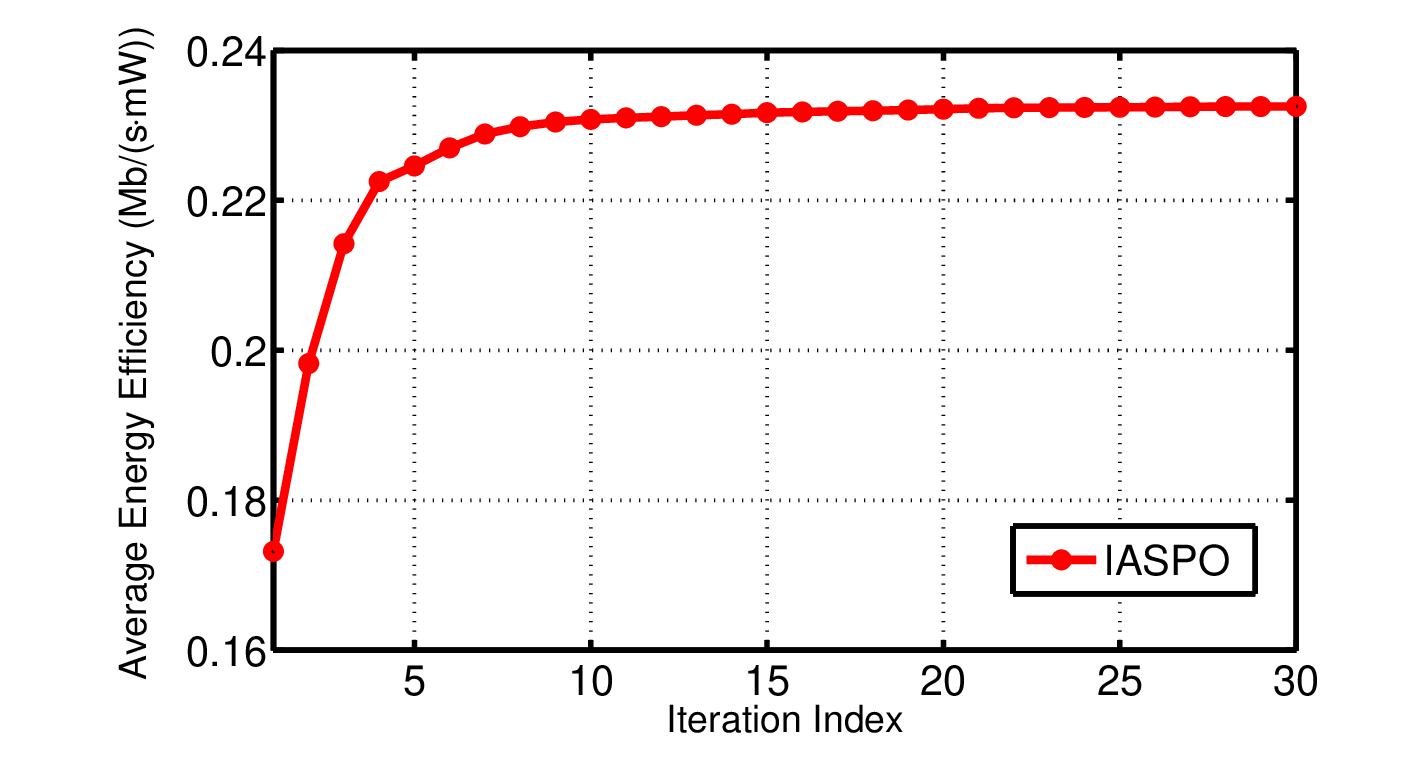}
\caption{Average energy efficiency vs. iteration index when $N_{u}=10$, $N_{d}=4$.}
\label{objective_vs_iteration}
\end{figure}

Then we consider the effect of the number of UEs $N_{u}$ and the number of UBSs $N_{d}$ on the energy efficiency $f_{EE}$ for all comparison algorithms. Fig. \ref{efficiency_vs_I} illustrates the energy efficiency vs. the number of UEs and Fig. \ref{efficiency_vs_J} illustrates the energy efficiency vs. the number of UBSs. From Figs. \ref{efficiency_vs_I}, \ref{efficiency_vs_J}, we can observe that:
\begin{itemize}
\item  The IASPO algorithm can achieve the highest energy efficiency compared with the other two algorithms except when $N_{u}=6$ and $N_{d}=4$. Given the number of UAVs (e.g., $N_{d}=4$), when the number of UEs is greater than six, the IASPO algorithm outperforms the IASPO-FDMA algorithm. When the number of UEs is six, the performance of the IASPO-FDMA algorithm is better than that of the the IASPO algorithm. This is because NOMA affects the resource allocation in our model in two aspects: 1) \emph{pros}: improve the spectrum efficiency of the network. 2) \emph{cons}: the association and subchannel optimization can only obtain an approximate solution due to the increased computational complexity. When the number of UEs is great, the exploitation of the NOMA technique improves the spectrum efficiency. However, when the number of UEs is small, the loss of exploiting the NOMA technique is greater than the benefit.
\item  The energy efficiency of the ASOO and IASPO-FDMA algorithms generally decreases with the increase of the number of UEs, while the energy efficiency of the IASPO algorithm is relatively robust to the increase of the number of UEs.
\item  The energy efficiency of all comparison algorithms generally increases with the increasing number UBS. However, the deployment of more UBSs means the consumption of more UAV resources.
    In summary, the above results indicate that our proposed IASPO algorithm can improve the energy efficiency, especially in the scenario where UAV resources are relatively scarce, that is, there are many UEs or few UBSs.
\end{itemize}

\begin{figure}[!t]
\centering
\includegraphics[width=3in]{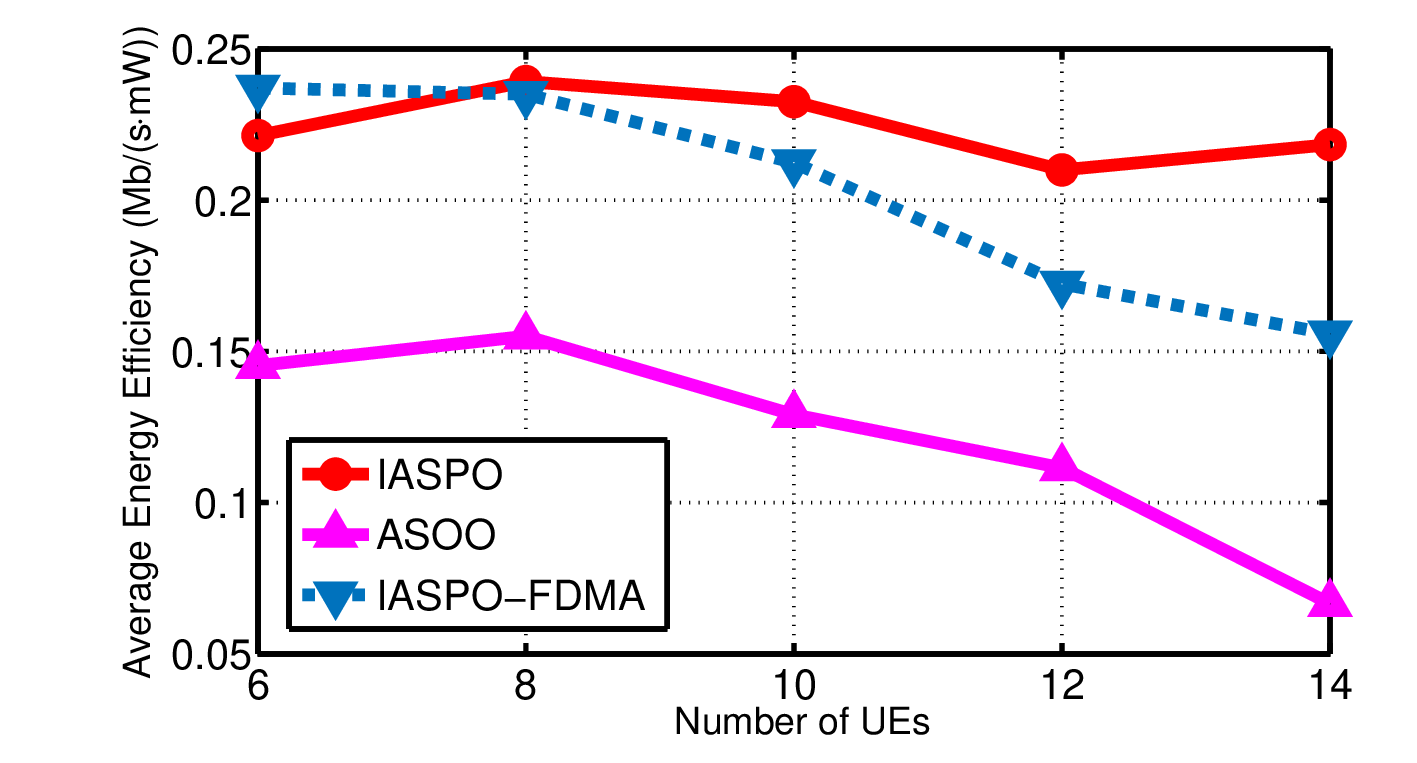}
\caption{Average energy efficiency vs. the number of UEs when $N_{d}=4$.}
\label{efficiency_vs_I}
\end{figure}
\begin{figure}[!t]
\centering
\includegraphics[width=3in]{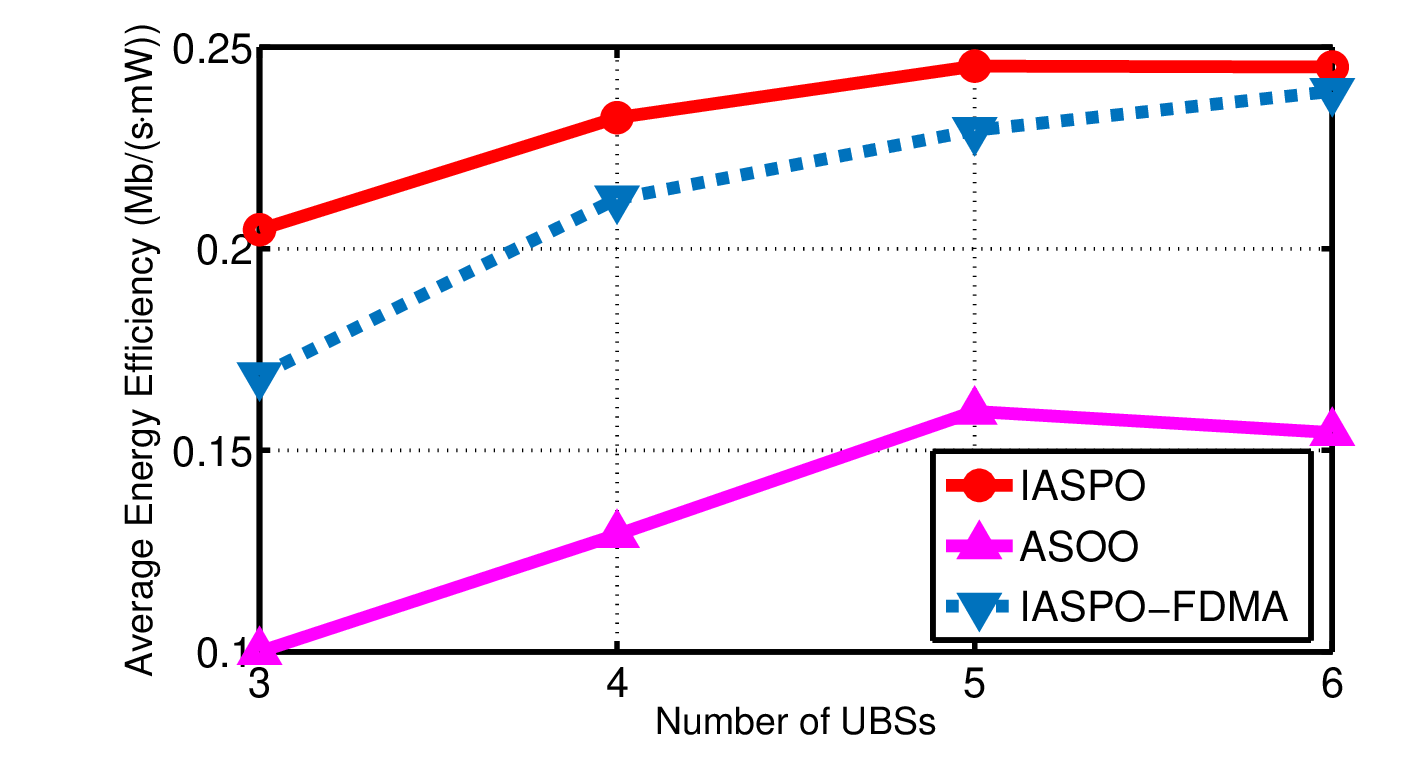}
\caption{Average energy efficiency vs. the number of UBSs when $N_{u}=10$.}
\label{efficiency_vs_J}
\end{figure}

\section{Conclusion}
This paper investigated the non-orthogonal resource allocation problem in a multi-UAV-aided network for providing eMBB services and formulated a joint non-orthogonal user association, subchannel allocation and power control problem to maximize the network energy efficiency. To alleviate this problem, we first decomposed it into two separated subproblems, namely, a user association and subchannel allocation subproblem and a power control subproblem.
We then designed a two-stage approximation strategy to solve the user association and subchannel allocation subproblem and exploited an SCA approach to approximate the power control subproblem.
Based on the above results, we then developed an iterative algorithm with provable convergence to solve the joint optimization problem. Simulation results verified that our proposed algorithm could improve the energy efficiency compared with several benchmark algorithms.
This paper assumes that each subchannel ${\mathcal{SC}_{jn}}$ can be assigned to at most two users, developing some low-complexity algorithms considering more-user NOMA may be a topic worthy of research in the near future.


%

\appendix

\subsection{Proof of Lemma 2}
\begin{proof}
Let ${{\mathcal P}^{(r)}}=\{p_{1,jn}^{(r)},p_{2,jn}^{(r)}\}$ denote the given transmit power point in the $(r+1)$-th iteration ($r\ge0$). Next, we discuss how to transform (\ref{P6}) into a convex optimization problem via the SCA approach in detail. Note that we need to approximate the left-hand side of (\ref{P6}b) as a concave function and the left-hand side of (\ref{P6}e) as a convex function.

First, we study the approximation of $r_{ijn}^p({\mathcal P})$ and $r_{ijn}^s({\mathcal P})$. For $r_{ijn}^p({\mathcal P})$, it can be written as a difference of two concave functions with respect to $\mathcal{P}$, i.e.,
\begin{equation}\label{r1}
r_{ijn}^p({\mathcal P}) = \mathord{\buildrel{\lower3pt\hbox{$\scriptscriptstyle\frown$}}
\over r} _{ijn}^p({\mathcal P}) - \mathord{\buildrel{\lower3pt\hbox{$\scriptscriptstyle\smile$}}
\over r} _{ijn}^p({\mathcal P}),
\end{equation}
where
\begin{equation}\label{r11}
\begin{array}{l}
\mathord{\buildrel{\lower3pt\hbox{$\scriptscriptstyle\frown$}}
\over r} _{ijn}^p({\mathcal P}) = \\
\frac{W}{N_s}{\log _2}\big( {{p_{1,jn}}{h_{ijn}} + {\!\!\!\!\!\!\!}\sum\limits_{k \ne j,k \in {\mathcal J}}{\!\!\!\!\!\!\!} {( {{p_{1,kn}} + {p_{2,kn}}} ){h_{ikn}}}  + p_n^Mh_{in}^M + \sigma _n^2} \big),
\end{array}
\end{equation}
\begin{equation}\label{r12}
\begin{array}{l}
\mathord{\buildrel{\lower3pt\hbox{$\scriptscriptstyle\smile$}}
\over r} _{ijn}^p({\mathcal P}) =
\frac{W}{N_s}{\log _2}\big( {{\!\!\!\!\!\!}\sum\limits_{k \ne j,k \in {\mathcal J}}{\!\!\!\!\!\!} {( {{p_{1,kn}} + {p_{2,kn}}} ){h_{ikn}}}  + p_n^Mh_{in}^M + \sigma _n^2} \big).
\end{array}
\end{equation}

It can be proved that any concave function is globally upper-bounded by its first-order Taylor expansion at any point \cite{boyd2004convex}. Therefore, we have the following upper bounds of $\overset{\scriptscriptstyle\frown}{r}_{ijn}^{p}(\mathcal{P})$ and $\overset{\scriptscriptstyle\smile}{r}_{ijn}^{p}(\mathcal{P})$ at the given transmit power point ${{\mathcal{P}}^{(r)}}$
\begin{equation}\label{r11up}
\begin{array}{l}
\mathord{\buildrel{\lower3pt\hbox{$\scriptscriptstyle\frown$}}
\over r} _{ijn}^p({\mathcal P}) \le
B_{ijn}^{(r)} + D_{ijn}^{(r)}\Big({h_{ijn}} \big( {{p_{1,jn}} - p{_{1,jn}^{(r)}}} \big) + \\
\sum\limits_{k \ne j,k \in {\mathcal J}} {{h_{ikn}}\big({{p_{1,kn}} + {p_{2,kn}} - p{_{1,kn}^{(r)}} - p{_{2,kn}^{(r)}}}\big)} \Big)\\
 = \mathord{\buildrel{\lower3pt\hbox{$\scriptscriptstyle\frown$}}
\over r} _{ijn}^{p,t(r)}({\mathcal P}),
\end{array}
\end{equation}
\begin{equation}\label{r12up}
\begin{array}{l}
\mathord{\buildrel{\lower3pt\hbox{$\scriptscriptstyle\smile$}}
\over r} _{ijn}^p({\mathcal P}) \le E_{ijn}^{(r)} + \\
F_{ijn}^{(r)}\Big( {\sum\limits_{k \ne j,k \in {\mathcal J}} {{h_{ikn}}\big( {{p_{1,kn}} + {p_{2,kn}} - p{_{1,kn}^{(r)}} - p{_{2,kn}^{(r)}}} \big)} } \Big)\\
 = \mathord{\buildrel{\lower3pt\hbox{$\scriptscriptstyle\smile$}}
\over r} _{ijn}^{p,t(r)}({\mathcal P}),
\end{array}
\end{equation}
where
\begin{equation}\label{Bijn}
\begin{array}{l}
B_{ijn}^{(r)} = \frac{W}{N_s}{\log _2}\Big(p_{1,jn}^{(r)}{h_{ijn}} + \\
\sum\limits_{k \ne j,k \in {\mathcal J}} {\big( {p_{1,kn}^{(r)} + p_{2,kn}^{(r)}} \big){h_{ikn}}}  + p_n^Mh_{in}^M + \sigma _n^2 \Big),
\end{array}
\end{equation}
\begin{equation}\label{Dijn}
\begin{array}{l}
D_{ijn}^{(r)} =
\frac{{{{\log }_2}(e)W/N_s}}{{p_{1,jn}^{(r)}{h_{ijn}} + {\!\!\!\!\!\!}\sum\limits_{k \ne j,k \in {\mathcal J}} {\!\!\!\!\!\!}{({p_{1,kn}^{(r)} + p_{2,kn}^{(r)}}){h_{ikn}}}  + p_n^Mh_{in}^M + \sigma _n^2}},
\end{array}
\end{equation}
\begin{equation}\label{Eijn}
\begin{array}{l}
E_{ijn}^{(r)} = \\
\frac{W}{N_s}{\log _2}\Big( {\sum\limits_{k \ne j,k \in {\mathcal J}} {\!\!\!\!\!\!} {\big( {p_{1,kn}^{(r)} + p_{2,kn}^{(r)}} \big){h_{ikn}}}  + p_n^Mh_{in}^M + \sigma _n^2} \Big),
\end{array}
\end{equation}
\begin{equation}\label{Fijn}
F_{ijn}^{(r)} = \frac{{{{\log }_2}(e)W/N_s}}{{\sum\limits_{k \ne j,k \in {\mathcal J}} {\big( {p_{1,kn}^{(r)} + p_{2,kn}^{(r)}} \big){h_{ikn}}}  + p_n^Mh_{in}^M + \sigma _n^2}}.
\end{equation}

Similarly, $r_{ijn}^s(\mathcal{P})$ can be written as a difference of two concave functions with respect to $\mathcal{P}$, i.e.,

\begin{equation}\label{r2}
r_{ijn}^s({\mathcal P}) = \mathord{\buildrel{\lower3pt\hbox{$\scriptscriptstyle\frown$}}
\over r} _{ijn}^s({\mathcal P}) - \mathord{\buildrel{\lower3pt\hbox{$\scriptscriptstyle\smile$}}
\over r} _{ijn}^s({\mathcal P}),
\end{equation}
where
\begin{equation}\label{r21}
\begin{array}{l}
\mathord{\buildrel{\lower3pt\hbox{$\scriptscriptstyle\frown$}}
\over r} _{ijn}^s({\mathcal P}) =
\frac{W}{N_s}{\log _2}\big( {\sum\limits_{k \in {\mathcal J}} {( {{p_{1,kn}} + {p_{2,kn}}} ){h_{ikn}}}  + p_n^Mh_{in}^M + \sigma _n^2} \big),
\end{array}
\end{equation}
\begin{equation}\label{r22}
\begin{array}{l}
\mathord{\buildrel{\lower3pt\hbox{$\scriptscriptstyle\smile$}}
\over r} _{ijn}^s(P) = \\
\frac{W}{N_s}{\log _2}\big( {{p_{1,jn}}{h_{ijn}} + {\!\!\!\!\!\!}\sum\limits_{k \ne j,k \in {\mathcal J}} {\!\!\!\!\!\!}{({{p_{1,kn}} + {p_{2,kn}}}){h_{ikn}}}  + p_n^Mh_{in}^M + \sigma _n^2} \big).
\end{array}
\end{equation}

It can be observed that $\overset{\scriptscriptstyle\smile}{r}_{ijn}^{s}(\mathcal{P})$ and $\overset{\scriptscriptstyle\frown}{r}_{ijn}^{p}(\mathcal{P})$ have an identical form. Thus, the upper bound of $\overset{\scriptscriptstyle\smile}{r}_{ijn}^{s}(\mathcal{P})$ at ${{\mathcal{P}}^{(r)}}$ can be expressed as
\begin{equation}\label{r22up}
\begin{array}{l}
\mathord{\buildrel{\lower3pt\hbox{$\scriptscriptstyle\smile$}}
\over r} _{ijn}^s({\mathcal P}) \le
B_{ijn}^{(r)} + D_{ijn}^{(r)}\Big({h_{ijn}} \big( {{p_{1,jn}} - p{_{1,jn}^{(r)}}} \big) +\\
\sum\limits_{k \ne j,k \in {\mathcal J}} {{h_{ikn}}\big({{p_{1,kn}} + {p_{2,kn}} - p{_{1,kn}^{(r)}} - p{_{2,kn}^{(r)}}}\big)} \Big)\\
=\mathord{\buildrel{\lower3pt\hbox{$\scriptscriptstyle\smile$}}
\over r} _{ijn}^{s,t(r)}({\mathcal P}).
\end{array}
\end{equation}

For $\overset{\scriptscriptstyle\frown}{r}_{ijn}^{s}(\mathcal{P})$, by leveraging the first-order Taylor expansion, we have the following upper bound of $\overset{\scriptscriptstyle\frown}{r}_{ijn}^{s}(\mathcal{P})$ at ${{\mathcal{P}}^{(r)}}$
\begin{equation}\label{r21up}
\begin{array}{l}
\mathord{\buildrel{\lower3pt\hbox{$\scriptscriptstyle\frown$}}
\over r} _{ijn}^s({\mathcal P}) \le G_{ijn}^{(r)} + \\
H_{ijn}^{(r)}\Big( {\sum\limits_{k \in {\mathcal J}} {{h_{ikn}}\big( {{p_{1,kn}} + {p_{2,kn}}  - p_{1,kn}^{(r)} - p_{2,kn}^{(r)}} \big)} } \Big)\\
 = \mathord{\buildrel{\lower3pt\hbox{$\scriptscriptstyle\frown$}}
\over r} _{ijn}^{s,t(r)}({\mathcal P}),
\end{array}
\end{equation}
where
\begin{equation}\label{Gijn}
\begin{array}{l}
G_{ijn}^{(r)} =
\frac{W}{N_s}{\log _2}\Big( {\sum\limits_{k \in {\mathcal J}} {\big( {p_{1,kn}^{(r)} + p_{2,kn}^{(r)}} \big){h_{ikn}}}  + p_n^Mh_{in}^M + \sigma _n^2} \Big),
\end{array}
\end{equation}
\begin{equation}\label{Hijn}
H_{ijn}^{(r)} = \frac{{{{\log }_2}(e)W/N_s}}{{\sum\limits_{k \in {\mathcal J}} {\big( {p_{1,kn}^{(r)} + p_{2,kn}^{(r)}} \big){h_{ikn}}}  + p_n^Mh_{in}^M + \sigma _n^2}}.
\end{equation}

It can be observed that the upper bounds $\overset{\scriptscriptstyle\frown}{r}_{ijn}^{p,t(r)}(\mathcal{P})$, $\overset{\scriptscriptstyle\smile}{r}_{ijn}^{p,t(r)}(\mathcal{P})$, $\overset{\scriptscriptstyle\frown}{r}_{ijn}^{s,t(r)}(\mathcal{P})$, and $\overset{\scriptscriptstyle\smile}{r}_{ijn}^{s,t(r)}(\mathcal{P})$ are linear functions with respect to $\mathcal{P}$.

Next, we study the approximation of the constraints (\ref{P6}b) and (\ref{P6}e).
By substituting (\ref{r12up}) into (\ref{r1}) and substituting (\ref{r22up}) into (\ref{r2}), for all $i\in \mathcal{I}$, we can obtain the lower bound of the left-hand side of the constraint (\ref{P6}b) as
\begin{equation}\label{LHS_P6b}
\begin{array}{l}
\sum\limits_{(j,n) \in {\mathcal S}_{JN}^{ap}(i)} {r_{ijn}^p({\mathcal P})}  + \sum\limits_{(j,n) \in {\mathcal S}_{JN}^{as}(i)} {r_{ijn}^s({\mathcal P})}  \\
\ge \sum\limits_{(j,n) \in {\mathcal S}_{JN}^{ap}(i)} {\left( {\mathord{\buildrel{\lower3pt\hbox{$\scriptscriptstyle\frown$}}
\over r} _{ijn}^p({\mathcal P}) - \mathord{\buildrel{\lower3pt\hbox{$\scriptscriptstyle\smile$}}
\over r} _{ijn}^{p,t(r)}({\mathcal P})} \right)} \\
 + \sum\limits_{(j,n) \in {\mathcal S}_{JN}^{as}(i)} {\left( {\mathord{\buildrel{\lower3pt\hbox{$\scriptscriptstyle\frown$}}
\over r} _{ijn}^s({\mathcal P}) - \mathord{\buildrel{\lower3pt\hbox{$\scriptscriptstyle\smile$}}
\over r} _{ijn}^{s,t(r)}({\mathcal P})} \right)}.
\end{array}
\end{equation}

Similarly, by substituting (\ref{r11up}) into (\ref{r1}) and substituting (\ref{r21up}) into (\ref{r2}), for all $j\in \mathcal{J}$, we obtain the upper bound of the left-hand side of the constraint (\ref{P6}e) as
\begin{equation}\label{LHS_P6e}
\begin{array}{l}
\sum\limits_{(i,n) \in {\mathcal S}_{IN}^{ap}(j)} {r_{ijn}^p({\mathcal P})}  + \sum\limits_{(i,n) \in {\mathcal S}_{IN}^{as}(j)} {r_{ijn}^s({\mathcal P})}  \\
\le \sum\limits_{(i,n) \in {\mathcal S}_{IN}^{ap}(j)} {\left( {\mathord{\buildrel{\lower3pt\hbox{$\scriptscriptstyle\frown$}}
\over r} _{ijn}^{p,t(r)}({\mathcal P}) - \mathord{\buildrel{\lower3pt\hbox{$\scriptscriptstyle\smile$}}
\over r} _{ijn}^p({\mathcal P})} \right)} \\
 + \sum\limits_{(i,n) \in {\mathcal S}_{IN}^{as}(j)} {\left( {\mathord{\buildrel{\lower3pt\hbox{$\scriptscriptstyle\frown$}}
\over r} _{ijn}^{s,t(r)}({\mathcal P}) - \mathord{\buildrel{\lower3pt\hbox{$\scriptscriptstyle\smile$}}
\over r} _{ijn}^s({\mathcal P})} \right)}.
\end{array}
\end{equation}

Therefore, with any given transmit power point ${{\mathcal{P}}^{(r)}}=\{p_{1,jn}^{(r)},p_{2,jn}^{(r)}\}$, the problem (\ref{P6}) can be approximated as the following form by referring to (\ref{LHS_P6b}) and (\ref{LHS_P6e})
\begin{subequations}\label{P7}
\begin{alignat}{2}
& \mathop {\max }\limits_{{\mathcal P},{{\eta }_{R}}, {{\eta }_{P}}, \{{{\eta }_{i}}\}, \{{p_j}\}} {\mkern 1mu} \quad {\eta _R} - {\eta _{EE}}{\eta _P}\\
& s.t \nonumber \\
&\sum\limits_{(j,n) \in {\mathcal S}_{JN}^{ap}(i)} {\!\!\!\!\!\!}{\left( {\mathord{\buildrel{\lower3pt\hbox{$\scriptscriptstyle\frown$}}
\over r} _{ijn}^p({\mathcal P}) - \mathord{\buildrel{\lower3pt\hbox{$\scriptscriptstyle\smile$}}
\over r} _{ijn}^{p,t(r)}({\mathcal P})} \right)}  + \nonumber\\
&\sum\limits_{(j,n) \in {\mathcal S}_{JN}^{as}(i)} {\!\!\!\!\!\!}{\left( {\mathord{\buildrel{\lower3pt\hbox{$\scriptscriptstyle\frown$}}
\over r} _{ijn}^s({\mathcal P}) - \mathord{\buildrel{\lower3pt\hbox{$\scriptscriptstyle\smile$}}
\over r} _{ijn}^{s,t(r)}({\mathcal P})} \right)}  \ge {\eta _i},\forall i \in {\mathcal I},  \\
&\sum\limits_{(i,n) \in {\mathcal S}_{IN}^{ap}(j)} {\!\!\!\!\!\!}{\left( {\mathord{\buildrel{\lower3pt\hbox{$\scriptscriptstyle\frown$}}
\over r} _{ijn}^{p,t(r)}({\mathcal P}) - \mathord{\buildrel{\lower3pt\hbox{$\scriptscriptstyle\smile$}}
\over r} _{ijn}^p({\mathcal P})} \right)}  +  \nonumber\\
&\sum\limits_{(i,n) \in {\mathcal S}_{IN}^{as}(j)} {\!\!\!\!\!\!}{\left( {\mathord{\buildrel{\lower3pt\hbox{$\scriptscriptstyle\frown$}}
\over r} _{ijn}^{s,t(r)}({\mathcal P}) - \mathord{\buildrel{\lower3pt\hbox{$\scriptscriptstyle\smile$}}
\over r} _{ijn}^s({\mathcal P})} \right)} \le C_j^{\max },\forall j \in {\mathcal J}, \\
& {\text{ (\ref{P6}c), (\ref{P6}d), (\ref{P6}f), (\ref{P6}g), (\ref{P6}h) and (\ref{P6}i) are satisfied.}}\nonumber
\end{alignat}
\end{subequations}

Since the left-hand sides of the constraints (\ref{P7}b) and (\ref{P7}c) are concave and convex with respect to $\mathcal{P}$ respectively, (\ref{P7}b) and (\ref{P7}c) are convex constraints. Therefore, the problem (\ref{P7}) is a convex optimization problem.

Note that the inequalities (\ref{LHS_P6b}) and (\ref{LHS_P6e}) indicate that any feasible solution of the problem (\ref{P7}) is also feasible for the problem (\ref{P6}), but the reverse is not true in general. Therefore, the optimal objective value obtained by solving (\ref{P7}) is the lower bound of that of (\ref{P6}).
\end{proof}

\subsection{Proof of Lemma 3}
\begin{proof}
In the $r$-th iteration ($r\ge1$), the obtained $\eta _{EE}({\mathcal A}^{(r)},{\mathcal P}^{(r)})$ can be expressed as
\begin{equation}\label{eta_1}
{\eta _{EE}}({{\mathcal A}^{(r)}},{{\mathcal P}^{(r)}}) = \frac{{{\eta_R}({{\mathcal A}^{(r)}},{{\mathcal P}^{(r)}})}}{{{{\eta_P}}({{\mathcal P}^{(r)}})}}.
\end{equation}

Then in the $(r+1)$-th iteration, after performing the association and subchannel optimization, we can obtain
\begin{equation}\label{eta_2}
{{\eta_R}({{\mathcal A}^{(r+1)}},{{\mathcal P}^{(r)}})} \ge {{\eta_R}({{\mathcal A}^{(r)}},{{\mathcal P}^{(r)}})}.
\end{equation}

According to (\ref{eta_1}) and (\ref{eta_2}), $\eta _{EE}({\mathcal A}^{(r+1)},{\mathcal P}^{(r)})$ satisfies
\begin{equation}\label{eta_3}
\begin{array}{l}
{\eta _{EE}}({\mathcal{A}^{(r+1)}},{\mathcal{P}^{(r)}}) = \frac{{{\eta_R}({\mathcal{A}^{(r{\rm{ + }}1)}},{\mathcal{P}^{(r)}})}}{{{{\eta_P}}({\mathcal{P}^{(r)}})}}\\
 \ge \frac{{{\eta_R}({\mathcal{A}^{(r)}},{\mathcal{P}^{(r)}})}}{{{{\eta_P}}({\mathcal{P}^{(r)}})}}{\rm{ = }}{\eta _{EE}}({\mathcal{A}^{(r)}},{\mathcal{P}^{(r)}}).
\end{array}
\end{equation}

After performing the power optimization, we can obtain
\begin{equation}\label{eta_4}
\begin{array}{l}
{\eta_R}({{\mathcal A}^{(r + 1)}},{{\mathcal P}^{(r + 1)}}) - {\eta _{EE}}({{\mathcal A}^{(r + 1)}},{{\mathcal P}^{(r)}}){{\eta_P}}({{\mathcal P}^{(r + 1)}})\\
 \ge {\eta_R}({{\mathcal A}^{(r + 1)}},{{\mathcal P}^{(r)}}) - {\eta _{EE}}({{\mathcal A}^{(r + 1)}},{{\mathcal P}^{(r)}}){{\eta_P}}({{\mathcal P}^{(r)}}) = 0.
\end{array}
\end{equation}

Thus we can obtain
\begin{equation}\label{eta_5}
{\eta_R}({{\mathcal A}^{(r + 1)}},{{\mathcal P}^{(r + 1)}}) \ge {\eta _{EE}}({{\mathcal A}^{(r + 1)}},{{\mathcal P}^{(r)}}){{\eta_P}}({{\mathcal P}^{(r + 1)}}),
\end{equation}
\begin{equation}\label{eta_6}
\begin{array}{l}
{\eta _{EE}}({{\mathcal A}^{(r + 1)}},{{\mathcal P}^{(r + 1)}}) = \frac{{{\eta_R}({{\mathcal A}^{(r + 1)}},{{\mathcal P}^{(r + 1)}})}}{{{{\eta_P}}({{\mathcal P}^{(r + 1)}})}}\\
 \ge {\eta _{EE}}({{\mathcal A}^{(r + 1)}},{{\mathcal P}^{(r)}}).
\end{array}
\end{equation}

According to (\ref{eta_3}) and (\ref{eta_6}), we can obtain
\begin{equation}\label{eta_7}
{\eta _{EE}}({{\mathcal A}^{(r + 1)}},{{\mathcal P}^{(r + 1)}}) \ge {\eta _{EE}}({{\mathcal A}^{(r)}},{{\mathcal P}^{(r)}}).
\end{equation}
and thus the convergence of the IASPO algorithm is proved.

The complexity of the IASPO algorithm is dominated by that of solving (\ref{P4}), (\ref{P5}) and (\ref{P7}). The complexities of solving the ILP problems (\ref{P4}), (\ref{P5}) are both $O( ({N_{u}}+1)^{N_d\cdot N_s} )$, and the complexity of solving the convex problem (\ref{P7}) is $O( (2N_d\cdot N_s)^{3.5} )$. Moreover, since (\ref{P4}), (\ref{P5}) and (\ref{P7}) need to be iteratively solved until the IASPO algorithm converges or reaches the maximum number of iterations ${{r}_{\max }}$, the complexity of the IASPO algorithm is $O( {{r}_{\max }}( 2({N_{u}}+1)^{N_d\cdot N_s} + (2N_d\cdot N_s)^{3.5} ))$ in the worst case. Although the complexity of the IASPO algorithm is exponential to $N_d\cdot N_s$, the actual complexity is usually much less than that of the worst case.
\end{proof}

%
%



%

\bibliographystyle{IEEEtran}
\bibliography{resource_allocation}
%
%

\end{document}